\newtheorem{thm}{Theorem}[section]
\newtheorem{prop}[thm]{Proposition}
\newtheorem{lemma}[thm]{Lemma}
\newtheorem{corollary}[thm]{\bf Corollary}
\newtheorem{assumption}[thm]{Assumption}
\theoremstyle{definition}
\theoremstyle{remark}
\newtheorem{remark}[thm]{\bf Remark}
\numberwithin{equation}{section}
\renewcommand*\env@cases[1][1.6]{%
  \let\@ifnextchar\new@ifnextchar
  \left\lbrace
  \def\arraystretch{#1}%
  \array{@{}l@{\quad}l@{}}%
}
\newcommand{\norm}[1]{\mbox{$\left\| #1 \right\|$}}
\newcommand{\sprod}[2]{\mbox{$\left\langle {\,#1},{\, #2} \right\rangle$}}
\newcommand{\R}{\mathbb{R}}
\renewcommand{\Re}{\text{Re}}
\newcommand{\df}{\ \mathrm{d}}
\newcommand{\menge}[2]{\left\{ \, {#1}  \ \big{|} \ {#2} \, \right\}}
\newcommand{\iii}{\mathrm{i}}
\renewcommand{\Re}{\mathrm{Re}}
\renewcommand{\exp}[1]{\mathrm{e}^{#1}}
\newcommand{\trace}[1]{\mathrm{Tr}{\left(#1\right)}}
\newcommand{\dist}[2]{\mathrm{dist}\left({#1},{#2}\right)}
 \newcommand{\lam}{\lambda_{j}}
\numberwithin{equation}{section}
\begin{document}

\title{Spectral estimates for the Heisenberg Laplacian on cylinders}

%AUTHORS

\author{Hynek Kova\v r\'{\i}k}
\address{Hynek Kova\v r\'{\i}k, DICATAM, Sezione di Matematica, Universit\`a degli studi di Brescia,
Via Branze, 38 - 25123  Brescia, Italy}
\email{hynek.kovarik@ing.unibs.it}

\author{Bartosch Ruszkowski}
\address{Bartosch Ruszkowski, Institute of Analysis, Dynamics
and Modeling, Universit\"at Stuttgart, 
PF 80 11 40, D-70569  Stuttgart, Germany}
\email{Bartosch.Ruszkowski@mathematik.uni-stuttgart.de}

\author{Timo Weidl}
\address{Timo Weidl, Institute of Analysis, Dynamics
and Modeling, Universit\"at Stuttgart, 
PF 80 11 40, D-70569  Stuttgart, Germany}
\email{weidl@mathematik.uni-stuttgart.de}

%%%%%%%%%%%%%%%%%%%%%%%%%%%%%%%%%%%%%%%%%%%%%%%%%%%%%%%%%%%%
\begin{abstract} We study Riesz means of  eigenvalues of
the Heisenberg Laplacian with Dirichlet boundary conditions on a cylinder in dimension three. We obtain an inequality with a sharp leading term and an additional lower order term.
\end{abstract}
%%%%%%%%%%%%%%%%%%%%%%%%%%%%%%%%%%%%%%%%%%%%%%%
\maketitle

\section{\bf Introduction}

Let $\Omega\subset \R^3$ be an open bounded domain. We consider the Heisenberg Laplacian on $\Omega$ with Dirichlet boundary condition, formally given by
\begin{align*}
\mathrm{A}(\Omega)\ := \ -X_{1}^{2}-X^{2}_{2} \quad \text{with} \quad X_1  \ :=  \ \partial_{x_1}+\frac{x_2}{2}\partial_{x_3}, \quad X_2 \ := \ \partial_{x_2}-\frac{x_1}{2}\partial_{x_3}\, .
\end{align*}
This operator is associated with the closure of the quadratic form
\begin{align}\label{einseins}
\mathrm{ {a}}[u]& \ :=  \ \int_{\Omega} \left|X_{1}u(x)\right|^{2}+\left|X_{2}u(x) \right|^{2} \df x,
\end{align}
initially defined on $C_{0}^{\infty}(\Omega)$.  It is known, see e.g.~\cite{follandsub},  \cite{hanson}, \cite{melas-type}, that 
$\mathrm{A}(\Omega)$ has purely discrete spectrum.
We denote by $(\lambda_{k}(\Omega))_{k\in \mathbb{N}}$ the non-decreasing unbounded sequence of the eigenvalues of  $\mathrm{A}(\Omega)$, where we repeat entries according to their finite multiplicities. We are interested in uniform upper bounds on the quantity 
\begin{equation*}
\trace{\mathrm{A}(\Omega)-\lambda}_-   =  \ \sum_{k=1}^{\infty} \, \left(\lambda_{k}(\Omega) -\lambda\right)_{-}  . 
\end{equation*}
In \cite{hanson} Hansson and Laptev proved the following Berezin-type inequality for $\mathrm{A}(\Omega)$: 
\begin{align}\label{result}
\trace{\mathrm{A}(\Omega)-\lambda}_{-} \ \leq \  \frac{|\Omega|}{96}\ \lambda^3 \qquad \forall\ \lambda>0.
\end{align}
It is also shown in \cite{hanson} that 
\begin{align} \label{asymp}
\sum_{k=1}^{\infty}\,  (\lambda-\lambda_{k}(\Omega))_{+}  \ = \ \frac{|\Omega|}{96}\,  \lambda^3 + o(\lambda^3) \qquad \text{as} \quad \lambda\to +\infty, 
\end{align}
which implies that the constant $\frac{1}{96}$ on the right hand side of \eqref{result} is sharp.

Nevertheless, the authors of the present paper proved, see \cite{melas-type}, that inequality \eqref{result} can be improved in the following sense; for a any bounded domain $\Omega\subset \R^3$, there exists a constant $C(\Omega)>0$ such that for any $\lambda\geq0$ it holds
\begin{align} \label{eq-krw1}
\trace{\mathrm{A}(\Omega) -\lambda}_{-} \ \leq  \ \max \left\{0,\  \frac{|\Omega|}{96} \ \lambda^{3}- C(\Omega)\, \lambda^{2} \right\} .
\end{align}
In other words, a negative remainder term of a lower order can be added to the right hand side of \eqref{result} without violating the inequality. 

\smallskip

\noindent In this paper we will prove that the order of the remainder term in \eqref{eq-krw1} can be further improved if we consider cylindrical domains of the type $\Omega = \omega\times (a,b)$, where $\omega \subset{\mathbb{R}^{2}}$ is open and bounded, and $a,b \in \R$ are such that $a<b$. In particular for cylinders wit convex cross-section $\omega$ our main result, Theorem \ref{dreieins}, implies that
\begin{align} \label{eq-krw2}
\trace{\mathrm{A}(\Omega)-\lambda }_{-}\ \leq\  \max\left\{0,\frac{|\Omega|}{96}\,  \lambda^{3}
-   \frac{\lambda^{2+\frac{1}{4}}}{2^7\cdot 3^{5/2} }\,  \frac{|{\Omega}|}{\mathrm{R}{(\omega)^{3/2}}}  \right\}\, ,
\end{align}
where $\mathrm{R}(\omega)$ is the in-radius of $\omega$, see Corollary \ref{cor1}. 
The proof of \eqref{eq-krw2} is based on the unitary equivalence of $\mathrm{A}(\Omega)$ 
to the two-dimensional Laplacian with constant magnetic field. To estimate the remainder term we use a boundary estimate for the magnetic Laplacian based on an application of a Hardy inequality in the spirit of \cite{davisss}, see Proposition \ref{prop}.

%%%%%%%%%%%%%%%%%%%%%%%%%%%%%%%%%%%%%%%
\section{\bf Notation and main results}\label{main}

\noindent As for the cross-section $\omega$, throughout the paper we will suppose that the following condition is satisfied.

\begin{assumption}\label{assumption}
The open domain ${\omega} \subset \mathbb{R}$ is bounded and simply connected with Lipschitz  boundary.
\end{assumption}

\smallskip

\noindent In the sequel we will decompose the vector $x=(x',x_{3})\in\mathbb{R}^{3}$. Let us denote by 
\begin{align}
\delta(x') \  := \ \dist{x'}{\partial \omega},
\end{align}
the distance function between a given $x'\in \omega $ and $\partial \omega$. The in-radius of $\omega$ is then given by 
$$
\mathrm{R}{(\omega)}:= \sup\limits_{x'\in \omega} {\delta(x')}\, .
$$

\subsubsection*{\bf Hardy inequality} Let $c=c(\omega)$ be defined by
\begin{align}\label{bartoscha}
 c^{-2} :=  \inf\limits_{u \in C_{0}^{\infty}( \omega)} \frac{\int_{ \omega}|\nabla_{x'} u(x')|^{2} \df x' }{\int_{  \omega}| u(x')/  {\delta}(x') |^{2} \df x' }\, , 
\end{align}
where $\nabla_x':=(\partial_{x_1},\partial_{x_2})$. 
Clearly, $c$  is the best constant in Hardy's inequality 
\begin{equation} \label{hardy-x'}
\int_{\omega} \frac{u(x')^2}{{\delta}(x')^2} \df x'  \ \leq \ c^2 \int_{ \omega}|\nabla_{x'} u(x')|^{2} \df x', \qquad u \in C^\infty_0(\omega). 
\end{equation} 

\smallskip

\begin{remark}\label{rem}
Under assumption \ref{assumption} it follows from {\mbox{\normalfont {{{\cite{ancona}}}}}} that
\begin{equation} \label{eq-c}
2 \, \leq\,  c \, \leq \,  4. 
\end{equation} 
The best possible value of $c$ is $c =2$. For a  survey on Hardy inequalities we refer to {\normalfont \cite{kufner}}, \mbox{\normalfont \cite{davies}}.
\end{remark}

\noindent  To continue we define for any $\beta>0$ the set $\omega^\beta$ by
$$ 
\omega^{\beta} \ := \ \left\{x'\in \omega\,  | \,  \delta(x')<\beta\right\}\, .
$$
Finally, we introduce the quantity 
$$
l({\omega}) := (b-a)\inf\limits_{0<\beta\leq \mathrm{R}{(\omega)}} \frac{|\omega^{\beta}|}{\beta}\, .
$$
Now can state the main result of this paper.

\begin{thm}\label{dreieins}
%Let $\Omega \subset \mathbb{R}^{3}$ be bounded. 
Let $\Omega:= \omega\times(a,b)$ and let $c$ is given by \eqref{bartoscha}. Then
\begin{align} \label{main-eq}
\trace{\mathrm{A}(\Omega)-\lambda }_{-}\leq \max\left\{0,\frac{|\Omega|}{96}\,  \lambda^{3}
- \lambda^{\frac{2c+5}{c+2}}  \frac{ (1+\frac 2c)}{96}\,  l(\omega)^{\frac{2c+2}{c+2}}\, |\Omega|^{-\frac{c}{c+2}}\, (4c+4)^{-\frac{2c+2}{c+2}} \right\}
\end{align}
holds for all $\lambda\geq 0$.  
\end{thm}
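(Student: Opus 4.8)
\emph{Proof strategy.} The plan is to reduce the three--dimensional spectral problem for $\mathrm{A}(\Omega)$ to a one--parameter family of two--dimensional magnetic Dirichlet Laplacians by a partial Fourier transform in $x_3$, to estimate each fibre by means of Proposition \ref{prop}, and then to integrate over the parameter. For $u\in C_0^\infty(\Omega)$, extended by zero to $\omega\times\R$, let $\widehat u(x',\xi)$ denote the Fourier transform in $x_3$. Since $\partial_{x_3}$ enters $X_1,X_2$ only through $\tfrac{x_2}{2}\partial_{x_3}$ and $-\tfrac{x_1}{2}\partial_{x_3}$, Plancherel's theorem turns the form \eqref{einseins} into $\mathrm{ {a}}[u]=\int_\R\mathrm h_\xi[\widehat u(\cdot,\xi)]\dx\xi$ and $\|u\|^2=\int_\R\|\widehat u(\cdot,\xi)\|_{L^2(\omega)}^2\dx\xi$, where $\mathrm h_\xi$ is the quadratic form of the Dirichlet magnetic Laplacian $\mathrm H_{|\xi|}(\omega)$ on $\omega$ with homogeneous field of strength $|\xi|$ (vector potential $\tfrac{\xi}{2}(x_2,-x_1)$). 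Thus $\mathrm{A}(\Omega)$ is unitarily equivalent to the restriction of $\int_\R^{\oplus}\mathrm H_{|\xi|}(\omega)\,\dx\xi$ to the closed subspace of functions whose inverse $x_3$--Fourier transform is supported in $(a,b)$. To transfer this to the trace I would use the variational principle $\trace{\mathrm{A}(\Omega)-\lambda}_-=\sup_{0\leq\gamma\leq\mathbf{1}}\trace{\gamma(\lambda-\mathrm{A}(\Omega))}$, the supremum over trace--class $\gamma$: a direct computation gives $\trace{\gamma(\lambda-\mathrm{A}(\Omega))}=\int_\R\trace{\Gamma(\xi)(\lambda-\mathrm H_{|\xi|}(\omega))}\,\dx\xi$ for the associated fibre operators $\Gamma(\xi)$ on $L^2(\omega)$, and the crucial point is that $0\leq\Gamma(\xi)\leq\tfrac{b-a}{2\pi}\,\mathbf{1}$, because for $w\in L^2(\omega)$ the vector $w(x')\,\mathrm{e}^{\ii\xi x_3}/\sqrt{2\pi}$, restricted to $x_3\in(a,b)$, has squared norm $\tfrac{b-a}{2\pi}\|w\|_{L^2(\omega)}^2$ while $\gamma\leq\mathbf{1}$ (Bessel's inequality). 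Using this together with $\mathrm H_B(\omega)\geq B$ (extension by zero to $\R^2$ and the Landau bound, which makes the integrand vanish for $|\xi|\geq\lambda$), one obtains
\begin{equation}\label{plan:step1}
\trace{\mathrm{A}(\Omega)-\lambda}_-\ \leq\ \frac{b-a}{2\pi}\int_\R\trace{\mathrm H_{|\xi|}(\omega)-\lambda}_-\,\dx\xi\ =\ \frac{b-a}{\pi}\int_0^{\lambda}\trace{\mathrm H_B(\omega)-\lambda}_-\,\dx B .
\end{equation}

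Next, for each fixed $B\in(0,\lambda)$ I would invoke Proposition \ref{prop}, whose proof combines the diamagnetic inequality $|\nabla|v||\leq|(\nabla-\ii\tfrac{B}{2}(x_2,-x_1))v|$ with the Hardy inequality \eqref{hardy-x'} and an IMS--type localisation (a partition of unity separating the bulk of $\omega$ from the boundary layer $\omega^{\beta}$): in the layer the Hardy potential $c^{-2}\delta(x')^{-2}\geq c^{-2}\beta^{-2}$ forces the low--lying states to vanish, while in the bulk one applies the sharp Berezin bound for the magnetic Laplacian with homogeneous field, whose leading term reproduces the Landau sum. The outcome is, for every $0<\beta\leq\mathrm R(\omega)$, an estimate of the form
\begin{equation}\label{plan:step2}
\trace{\mathrm H_B(\omega)-\lambda}_-\ \leq\ \frac{B\,|\omega|}{2\pi}\sum_{n\geq0}(\lambda-B(2n+1))_+\ -\ \mathcal D_\beta(B,\lambda),
\end{equation}
with a nonnegative boundary deficit $\mathcal D_\beta(B,\lambda)$ bounded below by an explicit expression involving $|\omega^{\beta}|$, $\beta$, $B$, $\lambda$ and $c$.

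Inserting \eqref{plan:step2} into \eqref{plan:step1} and integrating in $B$, the leading term is evaluated from the elementary Landau--sum identity $\int_0^{\lambda}B\sum_{n\geq0}(\lambda-B(2n+1))_+\,\dx B=\tfrac{\pi^2}{48}\lambda^3$, which together with the prefactor $\tfrac{b-a}{\pi}\cdot\tfrac{|\omega|}{2\pi}$ gives precisely $\tfrac{|\Omega|}{96}\lambda^3$, since $|\Omega|=(b-a)|\omega|$. The integrated deficit $\tfrac{b-a}{\pi}\int_0^{\lambda}\mathcal D_\beta(B,\lambda)\,\dx B$ is then bounded below by a product of the shape $K(c)\,\lambda^{\frac{2c+5}{c+2}}\,(b-a)\,(|\omega^{\beta}|/\beta)^{\frac{2c+2}{c+2}}\,|\omega|^{-\frac{c}{c+2}}$; choosing the best $\beta$ replaces $|\omega^{\beta}|/\beta$ by $\inf_{0<\beta\leq\mathrm R(\omega)}|\omega^{\beta}|/\beta$, which — using $|\Omega|=(b-a)|\omega|$ once more — reconstitutes exactly $l(\omega)$ and $|\Omega|$ in the exponents appearing in \eqref{main-eq}, while a final one--parameter optimisation (over the weight with which the form is split in the Hardy step, equivalently the localisation width) fixes the constant $K(c)=\tfrac{1+2/c}{96}(4c+4)^{-(2c+2)/(c+2)}$. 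Finally, since $\trace{\mathrm{A}(\Omega)-\lambda}_-\geq0$ one passes to the positive part of the bound; this also disposes of the small--$\lambda$ range, in which the optimal localisation width would exceed $\mathrm R(\omega)$ and in which $\tfrac{|\Omega|}{96}\lambda^3$ minus the subtracted term is anyway negative.

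The main obstacle is Step 2, i.e.\ the proof of Proposition \ref{prop}: the localisation error committed near $\partial\omega$ is, after integration in $B$, of the same order in $\lambda$ as the Hardy gain, so the two have to be balanced quantitatively rather than treated perturbatively, and one must make sure that the localisation leaves the sharp constant $\tfrac1{96}$ in the leading term untouched. It is exactly this interplay that prevents one from reaching the ``second Weyl term'' order $\lambda^{5/2}$ and forces the exponent down to $\tfrac{2c+5}{c+2}$; once \eqref{plan:step2} is available, the remaining integration in $B$ and the optimisation over $\beta$ are only a careful but routine computation.
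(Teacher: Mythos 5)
Your first step is fine: the fibre decomposition in $\xi_3$, Bessel's inequality giving the density bound $0\leq\Gamma(\xi)\leq\frac{b-a}{2\pi}$, and the resulting reduction to $\frac{b-a}{2\pi}\int_\R\trace{\mathrm H_{|\xi|}(\omega)-\lambda}_-\dx\xi$ is a legitimate reorganisation of what the paper does by hand with the Landau projections $\mathrm P_{k,\xi_3}$ (compare \eqref{magnet} and \eqref{vierzwei}); the leading term $\frac{|\Omega|}{96}\lambda^3$ indeed comes out of the Landau-sum identity exactly as you indicate.

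The genuine gap is your second step. Everything of substance is packed into the claimed two-term Berezin bound for the two-dimensional magnetic Dirichlet Laplacian $\mathrm H_B(\omega)$ with an explicit nonnegative deficit $\mathcal D_\beta(B,\lambda)$, and you do not prove it: the deficit is left unspecified, and the final constant $K(c)=\frac{1+2/c}{96}(4c+4)^{-(2c+2)/(c+2)}$ is asserted, not derived. Moreover you propose to obtain this bound by ``invoking Proposition \ref{prop}'', but Proposition \ref{prop} is a statement about the three-dimensional operator $\mathrm{A}(\Omega)$; it does not apply to the fibre operators $\mathrm H_B(\omega)$, so you would first have to prove and use a two-dimensional analogue (a Davies-type argument with the weight $(\max\{\delta,\beta\})^{-1/c}$, the Hardy inequality \eqref{hardy-x'}, the diamagnetic inequality and the Heinz inequality), which is absent from your plan. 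The mechanism you sketch instead --- an IMS partition of unity separating the bulk of $\omega$ from the layer $\omega^\beta$, with the sharp Berezin bound applied in the bulk --- is not a routine matter: the localisation error enters at an order that competes with the Hardy gain and threatens the sharp constant $\frac1{96}$, which you yourself flag as ``the main obstacle''; in other words the key quantitative step of your proof is left open. The paper avoids localisation entirely: it stays in three dimensions, gets the leading term \emph{exactly} from completeness of the eigenfunctions together with $\mathrm P_{k,\xi_3}(y,y)=\frac{|\xi_3|}{2\pi}$, and then bounds from below the discarded mass $\mathcal R_\lambda$ carried by the eigenfunctions with $\lambda_j\geq\lambda$ by applying Proposition \ref{prop} to the finite linear combinations of eigenfunctions with $\lambda_j<\lambda$ (which lie in $\mathrm{Dom}(\mathrm{A}(\Omega))$), restricted to the boundary layer $\omega^\beta\times(a,b)$; the choice $\beta^{1+2/c}=l(\omega)/(c^{2+2/c}\lambda^{1+1/c}(4+4/c)|\Omega|)$, admissible for $\lambda\geq\lambda_1(\Omega)$, then yields \eqref{main-eq} with the stated constant. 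So while your fibre reduction is a clean alternative route to the leading term, the remainder term in your proposal rests on an unproved two-dimensional two-term inequality and unverified constant bookkeeping, and as written the argument is incomplete.
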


\begin{remark}
Note that the order of the remainder term is larger than $\lambda^2$ for any $c>0$. So far the order of the second term in the asymptotic expansion \eqref{asymp} it is not known.
\end{remark}

\begin{remark}
For analogous improvements of the classical spectral estimates for Dirichlet Laplacian on bounded domains we refer to \cite{melas}, \cite{kvw}, \cite{yolcu}, \cite{yolcu2} and references therein. 
\end{remark}

\begin{remark}
Following \cite{melas-type} it can be shown that $l(\omega)$ is strictly positive. In particular, it holds $
l(\omega) \geq (b-a)\mathrm{R}{(\omega)}\pi$.
\end{remark}

\begin{corollary} \label{cor1}
Let $\Omega:= \omega\times(a,b)$. If $\omega$ is convex, then 
\begin{align*}
\trace{\mathrm{A}(\Omega)-\lambda }_{-}\leq \max\left\{0,\frac{1}{96} |\Omega| \lambda^{3}
- \lambda^{2+\frac{1}{4}}  \frac{1}{2^7\cdot3^2\sqrt{3}}  \frac{|{\Omega}|}{\mathrm{R}{(\omega)^{3/2}}}  \right\}
\end{align*}
holds for all $\lambda\geq0$. 
 \end{corollary}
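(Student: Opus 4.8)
The plan is to derive this as a direct specialization of Theorem \ref{dreieins}, using two elementary facts about convex domains: first, that the Hardy constant $c$ can be pinned to its best value $c=2$, and second, that the boundary layer quantity $l(\omega)$ admits a clean lower bound in terms of the in-radius. For the first point, recall from Remark \ref{rem} that in general $2 \leq c \leq 4$; for a convex domain one knows that the distance function $\delta(x')$ is concave, which is exactly what is needed to run the classical one-dimensional Hardy argument slicewise and obtain the optimal constant $c=2$ in \eqref{hardy-x'}. So I would first record that for convex $\omega$ we may take $c=2$, hence $\tfrac{2c+5}{c+2} = \tfrac{9}{4} = 2+\tfrac14$, $1+\tfrac2c = 2$, $\tfrac{2c+2}{c+2} = \tfrac{3}{2}$, $\tfrac{c}{c+2} = \tfrac12$, and $(4c+4) = 12$. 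Substituting these into \eqref{main-eq} turns the remainder term into
\begin{align*}
\lambda^{2+\frac14}\, \frac{2}{96}\, l(\omega)^{3/2}\, |\Omega|^{-1/2}\, 12^{-3/2},
\end{align*}
and since $2\cdot 12^{-3/2} = 2/(12\sqrt{12}) = 1/(6\cdot 2\sqrt3) = 1/(12\sqrt3)$, while $\tfrac{1}{96}\cdot\tfrac{1}{12\sqrt3} = \tfrac{1}{1152\sqrt3} = \tfrac{1}{2^7\cdot 3^2\sqrt3}$ (using $1152 = 2^7\cdot 3^2$), this matches the constant in the statement provided we also bound $l(\omega)^{3/2}$ from below appropriately.

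For the second point, I would invoke the estimate $l(\omega) \geq (b-a)\,\mathrm{R}(\omega)\,\pi$ mentioned in the third Remark after Theorem \ref{dreieins} — actually, for a convex domain an even more transparent bound suffices: by convexity, as $\beta$ decreases the ratio $|\omega^\beta|/\beta$ is monotone, so the infimum defining $l(\omega)$ is attained in the limit $\beta \to 0^+$, giving $l(\omega) = (b-a)\,|\partial\omega|$; combined with the isoperimetric-type inequality $|\partial\omega| \geq |\omega|/\mathrm{R}(\omega)$ valid for convex $\omega$ (integrate the coarea formula for $\delta$ using concavity of $\delta$), one gets $l(\omega) \geq (b-a)\,|\omega|/\mathrm{R}(\omega) = |\Omega|/\mathrm{R}(\omega)$. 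Then $l(\omega)^{3/2}\,|\Omega|^{-1/2} \geq |\Omega|^{3/2}\mathrm{R}(\omega)^{-3/2}|\Omega|^{-1/2} = |\Omega|\,\mathrm{R}(\omega)^{-3/2}$, which is precisely the factor appearing in the corollary. Assembling the constant computed above with this bound, and keeping the $\max\{0,\cdot\}$ structure from \eqref{main-eq}, yields the claimed inequality.

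The only genuinely non-routine step is justifying $c=2$ for convex cross-sections; everything else is bookkeeping of exponents and a monotonicity/isoperimetric estimate for $l(\omega)$. I expect the cleanest write-up to cite the concavity of $\delta$ on convex domains and the standard slicewise one-dimensional Hardy inequality (with optimal constant) to conclude $c \leq 2$, which together with $c \geq 2$ from Remark \ref{rem} forces $c = 2$; alternatively one can observe that \eqref{main-eq} is monotone in $c$ in the relevant regime and simply plug in the worst admissible value, but using $c=2$ gives the sharper stated constant. The main obstacle, then, is purely expository: making sure the algebra $2^7\cdot 3^2\sqrt 3 = 1152\sqrt3$ is displayed correctly and that the lower bound on $l(\omega)$ is stated with a clean citation rather than re-proved in detail.
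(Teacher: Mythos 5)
Your handling of the constant is fine: $c=2$ for convex $\omega$ (the paper simply cites \cite{davies}; your concavity-of-$\delta$ slicewise Hardy argument is the standard justification), and the exponent/constant bookkeeping $\frac{2c+5}{c+2}=\frac94$, $\frac{2}{96}\cdot 12^{-3/2}=\frac{1}{2^7\cdot 3^2\sqrt3}$ agrees with the paper. The genuine problem is your treatment of $l(\omega)$. You assert that for convex $\omega$ the infimum in $l(\omega)=(b-a)\inf_{0<\beta\le \mathrm{R}(\omega)}|\omega^\beta|/\beta$ is attained in the limit $\beta\to 0^+$ and equals $(b-a)|\partial\omega|$. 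This is backwards: the monotonicity result you are implicitly appealing to (\cite[Lemma 4.2]{kovarik}, which is exactly what the paper cites) says that $\beta\mapsto|\omega^\beta|/\beta$ is \emph{non-increasing} on $(0,\mathrm{R}(\omega)]$, so the limit $\beta\to0^+$ gives the \emph{supremum} $|\partial\omega|$, while the infimum is attained at $\beta=\mathrm{R}(\omega)$, where $|\omega^{\mathrm{R}(\omega)}|=|\omega|$, giving $l(\omega)=(b-a)|\omega|/\mathrm{R}(\omega)=|\Omega|/\mathrm{R}(\omega)$. (Check on the unit disk: $|\omega^\beta|/\beta=\pi(2-\beta)$ decreases from $|\partial\omega|=2\pi$ to $|\omega|/\mathrm{R}(\omega)=\pi$, so $l(\omega)/(b-a)=\pi$, not $2\pi$.) Your final inequality $l(\omega)\ge|\Omega|/\mathrm{R}(\omega)$ is nevertheless true — it is in fact an equality, which is exactly what the paper uses — but as written you derive it from a false identity, and the isoperimetric detour via $|\partial\omega|\ge|\omega|/\mathrm{R}(\omega)$ is then superfluous.

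A second, smaller point: the bound $l(\omega)\ge(b-a)\pi\mathrm{R}(\omega)$ from the remark, which you mention first as a possible substitute, does \emph{not} suffice for the stated constant, since $\pi\mathrm{R}(\omega)^2$ can be much smaller than $|\omega|$ (thin convex sets); so the corrected monotonicity argument is genuinely needed, not optional. With that one fix — quote the monotonicity in the correct direction and conclude $l(\omega)=|\Omega|/\mathrm{R}(\omega)$ — your proof coincides with the paper's.
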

\begin{proof}{
In case that $\omega$ is convex we have $c=2$ in \eqref{bartoscha} , see e.g.~\cite{davies}. 
In addition, $\frac{|\omega^{\beta}|}{\beta} $ is a decreasing function of $\beta$ on $(0,\mathrm{R}{(\omega)}]$, see  \cite[Lemma 4.2]{kovarik}.  Hence we compute
\begin{align*}
  l(\omega) = \frac{|{\Omega}|}{\mathrm{R}{(\omega)}}
\end{align*}
and simplify the constant in Theorem \ref{dreieins}.
}\end{proof}

\begingroup
\renewcommand{\i}{\mathrm{i}}

%%%%%%%%%%%%%%%%%%%%%%%%%%%%%%%%%%%%%%%%%%%%%%%%%
\section{\bf Preliminaries}

\subsection{\bf Magnetic Dirichlet Laplacian}\label{bartosch}
Let $ \mathrm{P_{k,\mathrm{B}}}$ be the orthogonal projection onto the kth Landau level  $\mathrm{B(2k-1)}$ of the Landau Hamiltonian with constant magnetic field for $\mathrm{B}>0$ in $L^{2}(\mathbb{R}^{2})$ and $\mathrm{k}\in \mathbb{N}$. Denote by $ \mathrm{P_{k,\mathrm{B}}}(x,y)$ the integral kernel of $ \mathrm{P_{k,\mathrm{B}}}$. 
 We will need these well-known characteristics
\begin{align}\label{magnet}
\begin{split}
 \mathrm{P_{k,\mathrm{B}}}(y,y) \ &= \ \frac{1}{2\pi}\mathrm{B}, \qquad \text{where} \ y\in \mathbb{R}^{2}, \\
 \int_{\mathbb{R}^{2}} \left(\int_{\Omega} \left| \mathrm{P_{k,\mathrm{B}}}(x,y) \right|^{2} \df y  \right) \df x \ &= \
  \int_{\Omega} \left(\int_{\mathbb{R}^{2}}   \mathrm{P_{k,\mathrm{B}}}(x,y) {\mathrm{P_{k,\mathrm{B}}}(y,x)} \df x  \right) \df y \\  
\  &=  \  \int_{\Omega} \   {\mathrm{P_{k,\mathrm{B}}}(y,y)} \df y = \frac{\mathrm{B}}{2 \pi} |\Omega|.
\end{split}
\end{align}

\subsection{A boundary estimate for the Heisenberg Laplacian}\label{zweizwei}
In this subsection we will derive a boundary estimate for the operator $\mathrm{A}(\Omega)$ which will be crucial in estimating the size of the remainder term in Theorem \ref{dreieins}.  

\begin{prop}\label{prop}Let $\Omega:=\omega\times(a,b) \subset \mathbb{R}^{3}$ and let $c$ be given by \eqref{bartoscha}. Then
\begin{align}
\int_a^b\int_{\omega^{\beta}} |u(x',x_3)|^{2} \df x' \df x_3 \, \leq \, c^{2+\frac 2c} \beta^{2+\frac 2c}\, \norm{\mathrm{A}(\Omega)\, u}_{L^{2}(\Omega)} \norm{\mathrm{A}(\Omega)^{1/c}\, u}_{L^{2}(\Omega)} 
\end{align}
holds for all $u\in \mathrm{Dom}(\mathrm{A}(\Omega))$ and any $\beta>0$
\end{prop}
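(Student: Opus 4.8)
The plan is to exploit the unitary equivalence, mentioned in the introduction, between $\mathrm{A}(\Omega)$ and a two-dimensional constant-field magnetic Laplacian acting fiberwise in the $x_3$-variable. Writing $\Omega=\omega\times(a,b)$ and taking the partial Fourier transform (or rather a direct integral decomposition) in the coordinate $x_3$, the operator $\mathrm{A}(\Omega)$ becomes a direct integral over the dual variable $\xi$ of operators of the form $(-\ii\nabla_{x'} + \xi A(x'))^2$ on $L^2(\omega)$ with Dirichlet conditions, where $A$ is the magnetic potential $\tfrac12(x_2,-x_1)$ generating a constant field of strength $|\xi|$. The key point is that for each fixed fiber the magnetic gradient dominates the Hardy weight: combining the diamagnetic inequality with the Hardy inequality \eqref{hardy-x'} gives, for $v$ in the fiber form domain,
\begin{align*}
\int_{\omega}\frac{|v(x')|^2}{\delta(x')^2}\df x' \ \leq\ c^2 \int_{\omega}\big|\nabla_{x'}|v|\big|^2 \df x' \ \leq\ c^2 \int_{\omega}\big|(-\ii\nabla_{x'}+\xi A)v\big|^2\df x' ,
\end{align*}
so that the same bound with constant $c^2$ holds for $\mathrm{A}(\Omega)$ itself after reassembling the fibers: $\int_\Omega |u|^2/\delta^2 \leq c^2\,\|\mathrm{A}(\Omega)^{1/2}u\|^2 = c^2\,\sprod{\mathrm{A}(\Omega)u}{u}$.

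Next I would localize to the boundary strip $\omega^\beta$. On $\omega^\beta$ we have $\delta(x')<\beta$, hence $1<\beta^2/\delta(x')^2$ there, which yields the crude estimate
\begin{align*}
\int_a^b\int_{\omega^\beta}|u|^2\df x'\df x_3 \ \leq\ \beta^2\int_\Omega \frac{|u(x)|^2}{\delta(x')^2}\df x \ \leq\ c^2\beta^2\,\sprod{\mathrm{A}(\Omega)u}{u}.
\end{align*}
This is the case "$c=1$" of the claimed inequality up to constants; to get the exponent $2+\tfrac2c$ and the split into $\|\mathrm{A}(\Omega)u\|$ times $\|\mathrm{A}(\Omega)^{1/c}u\|$ I would interpolate. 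Specifically, I expect one should not use $1<\beta^2/\delta^2$ directly but rather $1 \le (\beta/\delta(x'))^{2/c}$ on $\omega^\beta$ (valid since $\delta<\beta$ and $2/c>0$), giving
\begin{align*}
\int_a^b\int_{\omega^\beta}|u|^2 \df x' \df x_3 \ \leq\ \beta^{2/c}\int_\Omega \frac{|u(x)|^2}{\delta(x')^{2/c}}\df x .
\end{align*}
The weight $\delta^{-2/c}$ is then controlled by combining the basic Hardy estimate $\int |u|^2/\delta^2 \le c^2\sprod{\mathrm{A}(\Omega)u}{u}$ with the trivial bound $\int |u|^2 \le$ (something) via a Hölder interpolation in the exponent: writing $\delta^{-2/c} = (\delta^{-2})^{1/c}\cdot 1^{1-1/c}$ and applying Hölder with exponents $c$ and $c/(c-1)$, then using $\sprod{\mathrm{A}(\Omega)u}{u}\le \|\mathrm{A}(\Omega)u\|\,\|u\|$ and spectral-calculus bounds $\|u\| \le \|\mathrm{A}(\Omega)^{1/c}u\|^{?}\cdots$, one should recover precisely $c^{2+2/c}\beta^{2+2/c}\|\mathrm{A}(\Omega)u\|\,\|\mathrm{A}(\Omega)^{1/c}u\|$. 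Keeping track of how the factors of $c$ accumulate (Hardy constant $c^2$ raised to a fractional power, times the $\beta^{2/c}$ from localization giving total $\beta^{2+2/c}$) is the bookkeeping that must be done carefully.

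The main obstacle I anticipate is the interpolation step that produces the $\|\mathrm{A}(\Omega)^{1/c}u\|$ factor with the correct power of $c$: one must choose the Hölder exponents and the spectral-theorem interpolation so that the powers of $\beta$ add up to $2+2/c$ \emph{and} the operator norms appear as the first power of $\|\mathrm{A}(\Omega)u\|$ times the first power of $\|\mathrm{A}(\Omega)^{1/c}u\|$ rather than some other pairing. A clean way to organize this is: from $\int|u|^2/\delta^2\le c^2\sprod{\mathrm{A}(\Omega)u}{u}$, apply the operator inequality with $\mathrm{A}(\Omega)$ replaced by a rescaling, or equivalently use the scaling $\delta\mapsto$ and the form-domain version of Hardy at the level of the quadratic form $\|\mathrm{A}(\Omega)^{1/2}u\|^2$; then one interpolates between the weight $\delta^{-2}$ bound and the weight-$1$ bound $\int_{\omega^\beta}|u|^2 \le |\Omega|\cdot\|u\|_\infty^2$ is the wrong move — instead interpolate at the level of the inequality $\int |u|^2 \delta^{-2} \le c^2 \|\mathrm{A}(\Omega)u\|\,\|u\|$ by replacing $\|\mathrm{A}(\Omega)u\|$ with $\|\mathrm{A}(\Omega)^{s}u\|\|\mathrm{A}(\Omega)^{1-s}u\|$ type bounds and optimizing $s=1/c$. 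I would double-check the final constant against Remark \ref{rem} (where $c\in[2,4]$) and against Corollary \ref{cor1} (where $c=2$ gives exponent $2+\tfrac12$ at the level of $\lambda$, consistent with $2+\tfrac14$ after the Legendre transform in Theorem \ref{dreieins}).
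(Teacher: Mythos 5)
Your first half is fine and coincides with the paper's Lemma \ref{lemmaaa}: the fiberwise diamagnetic inequality plus \eqref{hardy-x'} gives $\int_\Omega |u|^2\delta^{-2}\df x \le c^2\, a[u]$. The gap is in the second half. No H\"older/spectral interpolation of the type you sketch can produce the stated right-hand side, and a power count shows why: anything built from the two global quantities $\int_\Omega|u|^2\delta^{-2}\le c^2\sprod{\mathrm{A}(\Omega)u}{u}$ and $\norm{u}^2$, after localizing with $1\le(\beta/\delta)^{2\theta}$ on $\omega^\beta$, has the form $\beta^{2\theta}\bigl(c^2\sprod{\mathrm{A}(\Omega)u}{u}\bigr)^{\theta}\norm{u}^{2(1-\theta)}$ with $\theta\in[0,1]$, i.e.\ on an eigenfunction with eigenvalue $\lambda$ it scales like $(\beta^2\lambda)^{\theta}$ with $\theta\le 1$, whereas the Proposition asserts a bound of order $(\beta^2\lambda)^{1+1/c}$, which is strictly smaller in the regime $\beta^2\lambda<1$ -- exactly the regime used in Section 4, where $\beta\sim\lambda^{-(c+1)/(c+2)}\to 0$. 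Concretely, your own computation yields $\int_a^b\int_{\omega^\beta}|u|^2 \le c^{2/c}\beta^{2/c}\norm{\mathrm{A}(\Omega)u}^{1/c}\norm{u}^{2-1/c}$, with only $\beta^{2/c}$ instead of $\beta^{2+2/c}$ and total operator power $1/c$ instead of $1+1/c$; replacing $\norm{u}$ by $\lambda_1^{-1/c}\norm{\mathrm{A}(\Omega)^{1/c}u}$ does not repair this. The extra smallness in $\beta$ encodes that $u$ lies in the operator domain (not merely the form domain) and cannot be recovered by interpolating norms of $u$ alone.

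The paper's proof supplies exactly the two ingredients your plan lacks. Following Davies, one tests the Hardy inequality of Lemma \ref{lemmaaa} not on $u$ but on $\varphi u$ with the boundary-adapted weight $\varphi(x)=(\max\{\delta(x'),\beta\})^{-1/c}$. Integration by parts gives $c^{-2}\int_\Omega \varphi^2|u|^2\delta^{-2}\df x \le \Re\sprod{\varphi^2u}{\mathrm{A}(\Omega)u}+\sprod{u}{|\nabla_{x'}\varphi|^2u}$, and by the eikonal equation $|\nabla_{x'}\varphi|^2=c^{-2}\delta^{-2-2/c}\chi_{\{\delta\ge\beta\}}$, so the second term is absorbed by the part of the left-hand side over $\{\delta\ge\beta\}$, leaving $c^{-2}\beta^{-2/c}\int_{\{\delta<\beta\}}|u|^2\delta^{-2}\df x$ on the left -- this is where $\beta^{2/c}$ beyond the trivial $\beta^2$ comes from. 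The factor $\norm{\mathrm{A}(\Omega)^{1/c}u}$ then comes from the weight itself, not from interpolating norms of $u$: since $\delta^{-2}\le c^2\mathrm{A}(\Omega)$ in the operator sense and $0<2/c\le1$, the Heinz inequality (operator monotonicity of $t\mapsto t^{2/c}$) gives $\varphi^4\le\bigl(c^2\mathrm{A}(\Omega)\bigr)^{2/c}$, hence $\norm{\varphi^2\mathrm{A}(\Omega)^{-1/c}}\le c^{2/c}$ and $|\sprod{\varphi^2u}{\mathrm{A}(\Omega)u}|\le c^{2/c}\norm{\mathrm{A}(\Omega)u}\,\norm{\mathrm{A}(\Omega)^{1/c}u}$. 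Multiplying by $\beta^2$ gives the Proposition. Without this weighted-test-function plus Heinz step your argument proves a genuinely weaker inequality, insufficient for the remainder term in Theorem \ref{dreieins}.
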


\noindent For the proof we need the following Lemma.

\begin{lemma}\label{lemmaaa}
Let $\Omega:=\omega\times(a,b) \subset \mathbb{R}^{3}$. Then for all $u \in d[a]$, the form domain of  the closure of \eqref{einseins},  and any $\beta>0$ we have
\begin{align*}
 \int_{a}^{b}  \int_{\omega} \frac{|u(x',x_{3})|^{2}}{\delta(x')^2} \df x'  \df x_{3} \leq c^2 \,  a[u].
\end{align*}
\end{lemma}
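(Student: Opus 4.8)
The plan is to prove Lemma \ref{lemmaaa} by exploiting the cylindrical structure of $\Omega=\omega\times(a,b)$, which lets us integrate the Hardy inequality \eqref{hardy-x'} in the transverse variables over the longitudinal variable $x_3$, and then to bound the resulting $\nabla_{x'}$-Dirichlet energy by the Heisenberg form $a[u]$. First I would reduce to $u\in C_0^\infty(\Omega)$ by density, since the right-hand side $a[u]$ controls the form norm and both sides are continuous in the form topology. For fixed $x_3\in(a,b)$, the function $x'\mapsto u(x',x_3)$ lies in $C_0^\infty(\omega)$, so \eqref{hardy-x'} gives
\begin{align*}
\int_\omega \frac{|u(x',x_3)|^2}{\delta(x')^2}\,\df x' \ \leq\ c^2 \int_\omega |\nabla_{x'} u(x',x_3)|^2\,\df x'.
\end{align*}
Integrating this over $x_3\in(a,b)$ yields
\begin{align*}
\int_a^b\int_\omega \frac{|u(x',x_3)|^2}{\delta(x')^2}\,\df x'\,\df x_3 \ \leq\ c^2 \int_a^b\int_\omega |\nabla_{x'} u|^2\,\df x'\,\df x_3.
\end{align*}

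The remaining step is to show $\int_\Omega |\nabla_{x'}u|^2\,\df x \leq a[u]$, i.e. $\int_\Omega (|\partial_{x_1}u|^2+|\partial_{x_2}u|^2)\,\df x \leq \int_\Omega(|X_1 u|^2+|X_2 u|^2)\,\df x$. Expanding $|X_1u|^2+|X_2u|^2$ with $X_1=\partial_{x_1}+\tfrac{x_2}{2}\partial_{x_3}$, $X_2=\partial_{x_2}-\tfrac{x_1}{2}\partial_{x_3}$ gives, for real-valued $u$ (or taking real parts),
\begin{align*}
|X_1u|^2+|X_2u|^2 \ =\ |\partial_{x_1}u|^2+|\partial_{x_2}u|^2 + \tfrac{x_2^2+x_1^2}{4}|\partial_{x_3}u|^2 + x_2\,\partial_{x_1}u\,\partial_{x_3}u - x_1\,\partial_{x_2}u\,\partial_{x_3}u.
\end{align*}
Integrating the cross terms over $\Omega$ and integrating by parts in $x_1$ (resp. $x_2$): $\int_\Omega x_2\,\partial_{x_1}u\,\partial_{x_3}u\,\df x = -\int_\Omega x_2\, u\,\partial_{x_1}\partial_{x_3}u\,\df x$ and $\int_\Omega x_1\,\partial_{x_2}u\,\partial_{x_3}u\,\df x = -\int_\Omega x_1\,u\,\partial_{x_2}\partial_{x_3}u\,\df x$; a second integration by parts, now in $x_3$, turns these into terms that combine to $-\tfrac12\int_\Omega \partial_{x_3}(u^2)\,\df x = 0$ by periodicity-free vanishing at $x_3=a,b$ — more precisely the two cross terms are seen directly to cancel against each other after symmetrizing. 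Thus the cross terms integrate to zero and we obtain
\begin{align*}
a[u] \ =\ \int_\Omega |\nabla_{x'}u|^2\,\df x + \tfrac14\int_\Omega(x_1^2+x_2^2)|\partial_{x_3}u|^2\,\df x \ \geq\ \int_\Omega|\nabla_{x'}u|^2\,\df x,
\end{align*}
which chained with the integrated Hardy inequality gives the claim. Since only the inequality $\int_\Omega|\nabla_{x'}u|^2 \leq a[u]$ is needed, the computation of the cross terms can also be short-circuited: one notes that $\nabla_{x'}$ and the magnetic drift $\tfrac12(x_2,-x_1)\partial_{x_3}$ are, after a partial Fourier transform in $x_3$, the real and imaginary-shift parts making $X_1,X_2$ a magnetic gradient, and the magnetic Dirichlet form dominates the free transverse Dirichlet form fibrewise by the diamagnetic inequality.

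The main obstacle is the cross-term bookkeeping in $a[u]$: one must be careful that, after a partial Fourier transform in $x_3$ (or directly), the operator in the transverse variables is the magnetic Schrödinger operator with vector potential $A_\xi(x')=\tfrac{\xi}{2}(x_2,-x_1)$ (where $\xi$ is dual to $x_3$), so $a[u]=\int_\R \int_\omega |(\nabla_{x'}+\ii A_\xi)\hat u(\cdot,\xi)|^2\,\df x'\,\df\xi$; then the diamagnetic inequality $|(\nabla_{x'}+\ii A_\xi)\hat u|\geq |\nabla_{x'}|\hat u||$ pointwise, together with the scalar Hardy inequality \eqref{hardy-x'} applied to $|\hat u(\cdot,\xi)|$ (valid since $|\hat u(\cdot,\xi)|\in H_0^1(\omega)$ by $\omega$ simply connected / Assumption \ref{assumption}), yields for each $\xi$
\begin{align*}
\int_\omega \frac{|\hat u(x',\xi)|^2}{\delta(x')^2}\,\df x' \ \leq\ c^2\int_\omega |\nabla_{x'}|\hat u(x',\xi)||^2\,\df x' \ \leq\ c^2\int_\omega |(\nabla_{x'}+\ii A_\xi)\hat u(x',\xi)|^2\,\df x',
\end{align*}
and integrating in $\xi$ and using Plancherel finishes the proof. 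A small technical point to address is the applicability of Hardy's inequality to $|\hat u(\cdot,\xi)|$ rather than to smooth compactly supported functions; this is handled by the standard density/truncation argument, since $|\hat u(\cdot,\xi)|$ lies in the closure $H_0^1(\omega)$ and \eqref{hardy-x'} extends there by continuity.
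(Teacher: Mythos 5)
Your final paragraph is, in substance, exactly the paper's proof: partial Fourier transform in $x_3$, the identity $a[u]=\int_{\R}\int_{\omega}|(\ii\nabla_{x'}+\xi\,\mathbf{A}(x'))\mathcal{F}_3u(x',\xi)|^{2}\df x'\df\xi$, the diamagnetic inequality in the form $|\nabla_{x'}|\mathcal{F}_3u||\le|(\ii\nabla_{x'}+\xi\mathbf{A})\mathcal{F}_3u|$, Hardy's inequality \eqref{hardy-x'} applied to the modulus $|\mathcal{F}_3u(\cdot,\xi)|\in H^1_0(\omega)$, and Plancherel; with the closing density remark this is a complete argument. The problem is everything before it. The route you present as the main proof rests on the inequality $\int_\Omega|\nabla_{x'}u|^2\df x\le a[u]$, and that inequality is false in general. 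First, the cross terms do not cancel: for real $u=f(x')g(x_3)+h(x')k(x_3)\in C_0^\infty(\Omega)$ one computes $\int_\Omega\bigl(x_2\,\partial_{x_1}u-x_1\,\partial_{x_2}u\bigr)\partial_{x_3}u\df x=-2\Bigl(\int_\omega(\mathbf{A}\cdot\nabla_{x'}f)\,h\df x'\Bigr)\Bigl(\int_a^b g\,k'\df x_3\Bigr)\cdot(-2)$-type expressions which are generically nonzero (your integration by parts is circular, and the claimed reduction to $-\tfrac12\int\partial_{x_3}(u^2)$ does not hold because $u\,\partial_{x_1}\partial_{x_3}u$ is not a pure $x_3$-derivative). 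Second, and more fundamentally, the "short-circuit" claim that the magnetic fibre form dominates the free transverse form is not what the diamagnetic inequality says: it controls $\nabla_{x'}|\mathcal{F}_3u|$, not $\nabla_{x'}\mathcal{F}_3u$. Indeed, writing $\mathcal{F}_3u(\cdot,\xi)=e^{\ii\theta}w$ with $w$ real supported near a point $x_0'\neq0$ and $\theta(x')=\xi\,\mathbf{A}(x_0')\cdot x'$, one gets $\int_\omega|(\ii\nabla_{x'}+\xi\mathbf{A})\mathcal{F}_3u|^2=\int_\omega\bigl(|\nabla_{x'}w|^2+|\xi\mathbf{A}-\nabla\theta|^2w^2\bigr)<\int_\omega\bigl(|\nabla_{x'}w|^2+|\nabla\theta|^2w^2\bigr)=\int_\omega|\nabla_{x'}\mathcal{F}_3u|^2$, so the magnetic energy can lie strictly below the free one; consequently $\int_\Omega|\nabla_{x'}u|^2\le a[u]$ fails for suitable (even essentially real, oscillating) $u\in C_0^\infty(\Omega)$.

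The fix is precisely what you do at the end: apply Hardy to $|\mathcal{F}_3u(\cdot,\xi)|$ fibrewise and then the diamagnetic inequality, rather than applying Hardy to $u(\cdot,x_3)$ and trying to compare $\int_\Omega|\nabla_{x'}u|^2$ with $a[u]$. So the proposal should be rewritten with the last paragraph as the whole proof (plus the density step), and the cross-term computation and the "magnetic dominates free" claim deleted, since as stated they assert false identities.
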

\begin{proof}{Let $u$ be in $C_{0}^{\infty}(\Omega)$. In addition let us denote by $\mathcal{F}_{3}$ the Fourier transform in $x_{3}$-direction, which is a unitary map in $L^{2}(\mathbb{R})$. Because $\Omega$ is a cylinder, the function $\left|\mathcal{F}_{3} u(x',\xi_{3})\right|$ lies in $\mathrm{H}_{0}^{1}(\omega)$ for fixed $\xi_{3}\in \mathbb{R}$.
Therefore we can apply inequality \eqref{hardy-x'} to get
\begin{align*} 
\int_{a}^{b}  \int_{\omega}\frac{|u(x',x_{3})|^{2}}{\delta(x')^2} \df x'  \df x_{3} &=\int_{\mathbb{R}} \int_{\omega} \left(\frac{ |\mathcal{F}_{3} u(x',\xi_{3})|}{ {\delta}(x')}\right)^{2} \df x'  \df \xi_{3} \\
&\leq  c^2 \int_{\mathbb{R}} \int_{\omega}  \left( \nabla_{x'} |\mathcal{F}_{3} u(x',\xi_{3})|  \right)^{2}  \df x'  \df \xi_{3}.
\end{align*}
Now we set 
\begin{equation} \label{vec-pot}
{\mathrm{\textbf{A}}}(x'):=\frac{1}{2}(-x_{2},x_{1}),
\end{equation}
and apply the diamagnetic inequality which states that
\begin{align} \label{diamag}
\left| \nabla |\psi| \right| \ \leq \ \left|(\i\nabla+\mathrm{\textbf{A}}) \psi\right| \qquad \text{a. e.}
\end{align}
holds for all $\psi\in H^1(\omega)$, see e.g.~{\normalfont\cite{loss}}. This gives
\begin{align*} 
\begin{split}
 \int_{\mathbb{R}} \int_{\omega}  \left( \nabla_{x'} |\mathcal{F}_{3} u(x',\xi_{3})|  \right)^{2}  \df x'  \df \xi_{3}
  &\leq \    \int_{\mathbb{R}} \int_{\omega}  \left|\left(\i \nabla_{x'}+\xi_{3}{\mathrm{\textbf{A}}}(x')\right)\mathcal{F}_{3} u(x',\xi_{3}) \right|^{2}  \df x'  \df \xi_{3}.
     \end{split}
\end{align*}
Integration by parts in the $x_{3}$-direction yields the inequality for $u \in C_{0}^{\infty}(\Omega)$. A density argument completes the proof.
}\end{proof}

\begin{proof}[Proof of Proposition {\normalfont\ref{prop}}] We follow the proof of \cite[Thm 4]{davisss}. Let us fix $u\in \mathrm{Dom}(\mathrm{A}(\Omega))$ and set
\begin{align*}
\varphi(x):=(\max\{\delta(x'),\beta\})^{-1/c}.
\end{align*}
for $x:=(x',x_3)\in{\Omega}$ and $\beta>0$. In what follows we will use the notation
$$
\nabla_{\mathbb{H}} = (X_1, X_2)
$$
to denote the Heisenberg gradient. First we check that $\varphi u\in  d[a]$. 
Since  $u\in\mathrm{Dom}(\mathrm{A}(\Omega))\subseteq d[a]$,  $\varphi\in H_0^{1}(\omega)$ and get
\begin{align}\label{finitness}
\int_\Omega |\nabla_{\mathbb{H}}(\varphi(x)u(x))|^2 \df x \leq2 \int_\Omega |\varphi(x)\nabla_{\mathbb{H}}u(x)|^2 \df x+2 \int_\Omega |\nabla_{x'}\varphi(x)|^2|u(x)|^2 \df x.
\end{align}
Note that we used here $\nabla_{\mathbb{H}}\varphi(x)=\nabla_{x'}\varphi(x)$ for all $x\in \Omega$. The Eikonal equation
\begin{align} \label{eikonal}
 |\nabla_{x'}\varphi(x)|^2=1 \quad \text{a.e.} \ x\in\Omega,
\end{align}
and the boundedness of $\Omega$ yield the finiteness of \eqref{finitness}. Hence $\varphi u\in  d[a]$ and we may use Lemma \ref{lemmaaa} to get
\begin{align*}
c^{-2}\int_{\Omega} \frac{|\varphi(x)u(x)|^2}{\delta(x')^2} \df x &\leq   \int_\Omega |\varphi(x)\nabla_{\mathbb{H}}u(x)+u(x)\nabla_{\mathbb{H}}\varphi(x)|^2 \df x\\
&=\sprod{\varphi^2\nabla_{\mathbb{H}}u}{\nabla_{\mathbb{H}}u}+\sprod{u}{|\nabla_{\mathbb{H}}\varphi|^2u} \\ &+\frac{1}{2}\sprod{\nabla_{\mathbb{H}}u}{u\nabla_{\mathbb{H}}(\varphi^2)}+\frac{1}{2}\sprod{u\nabla_{\mathbb{H}}(\varphi^2)}{\nabla_{\mathbb{H}}u},
\end{align*}
where we denote by $\sprod{\cdot}{\cdot}$ the scalar product in $L^{2}(\Omega)$. An integration by parts in the last two terms yields
\begin{align*}
 c^{-2}\int_{\Omega} \frac{|\varphi(x)u(x)|^2}{\delta(x')^2} \df x\,  \leq\,  \Re\sprod{\varphi^2u}{\mathrm{A}(\Omega)u}+\sprod{u}{|\nabla_{\mathbb{H}}\, \varphi|^2u}.
\end{align*}
Next we will  estimate the first term on the right hand side. To this end we use Lemma \ref{lemmaaa}, which gives
\begin{align*}
\delta^{-2} \leq c^2 \mathrm{A}(\Omega)
\end{align*}
in the operator sense. Then, by the Heinz inequality, see \cite[Lemma 4.20]{daviesss}, 
\begin{align*}
\varphi^4\leq (\delta^{-2})^{2/c}\leq  \left(c^{2} \mathrm{A}(\Omega)\right)^{2/c}.
\end{align*}
Since $\mathrm{A}(\Omega)^{-1/c}$ is  bounded in $L^{2}(\Omega)$ we obtain
\begin{align*}
\norm{ \varphi^2 \mathrm{A}(\Omega)^{-1/c} }\leq c^{2/c},
\end{align*}
where $\|\cdot\|$ stands for the operator norm in $L^{2}(\Omega)$. Hence
\begin{align*}
|\sprod{\mathrm{A}(\Omega)u}{\varphi^2u}|=|\sprod{\mathrm{A}(\Omega)u}{\varphi^2\mathrm{A}(\Omega)^{-1/c}\mathrm{A}(\Omega)^{1/c}u}| \leq \norm{\mathrm{A}(\Omega)u}_{L^{2}(\Omega)} c^{2/c} \norm{\mathrm{A}(\Omega)^{1/c}u}_{L^{2}(\Omega)}.
\end{align*}
So we arrive at
\begin{align}\label{endineq}
 c^{-2}\int_{\Omega} \frac{|\varphi(x)u(x)|^2}{\delta(x')^2} \df x \leq  \norm{\mathrm{A}(\Omega)u}_{L^{2}(\Omega)} c^{2/c} \norm{\mathrm{A}(\Omega)^{1/c}u}_{L^{2}(\Omega)}+\sprod{u}{|\nabla_{\mathbb{H}}\varphi|^2u}.
\end{align}
On the other hand, the Eikonal equation \eqref{eikonal} implies that
\begin{align*}
|\nabla_{\mathbb{H}}\varphi(x)|^2=|\nabla_{x'}\varphi(x)|^2=c^{-2}\delta(x')^{-2/c-2}\chi_{\{\delta(x')\geq\beta \}}(x'),
\end{align*}
where $\chi_{\{\delta(x')>\beta \}}$ is the characteristic function of the set $\{x\in {\Omega}\, |\,  \delta(x')\geq\beta\}$. Inserting the above identity into \eqref{endineq} thus yields
\begin{align}
 \int_{\{x\in \Omega | \delta(x')<\beta \}} \frac{u(x)|^2}{\delta(x')^2} \df x \leq \beta^{2/c} \norm{\mathrm{A}(\Omega)u}_{L^{2}(\Omega)} c^{2+2/c} \norm{\mathrm{A}(\Omega)^{1/c}u}_{L^{2}(\Omega)}.
\end{align}
The result now follows from the estimate
\begin{align*}
  \int_{\{x\in \Omega | \delta(x')<\beta \}}  {|u(x)|^2} \df x  \leq \beta^{2}  \int_{\{x\in \Omega | \delta(x')<\beta \}} \frac{u(x)|^2}{\delta(x')^2} \df x.
\end{align*}
\end{proof}

\endgroup

\begingroup
\renewcommand{\i}{\mathrm{i}}

%%%%%%%%%%%%%%%%%%%%%%%%%%%%%%%%%%%%%%%%
\section{\bf Proof of  Theorem \ref{dreieins} }

\renewcommand{\i}{\mathrm{i}}

\noindent Here and below we write a vector $x\in \R^3$ as $x=(x', x_{3})$.
Let $v_{j}$ the orthonormal basis of the eigenfunctions of $\mathrm{A}(\Omega) $ for $j \in \mathbb{N}$;
\begin{align*}
\mathrm{A}(\Omega) v_{j} \ = \ \lambda_{j}  v_{j}, \quad \norm{v_{j}}_{L^{2}(\Omega)}=1.
\end{align*}
Let $\mathcal{F}_{3}$ be the partial Fourier transform in the $x_3$ variable. Then
\begin{align*}
\mathcal{F}_{3} \mathrm{A}(\R^3)\mathcal{F}_{3}^{*} \ = \ \left(\i \partial_{x_{1}}-\frac{1}{2}x_{2}\xi_{3} \right)^{2}+\left(\i\partial_{x_{2}}+\frac{1}{2}x_{1}\xi_{3} \right)^{2} \ = \ \left(\i \nabla_{x'}+\xi_{3}{\mathrm{\textbf{A}}}(x') \right)^{2},
\end{align*}
where ${\mathrm{\textbf{A}}}(x')$ is given by \eqref{vec-pot}. At this point we use the properties of the magnetic Laplacian, see section \ref{bartosch}, to obtain
\begin{align}\label{decomposition}
\mathcal{F}_{3}\, \mathrm{A}(\R^3)\, u(x',\xi_{3}) = \sum_{\mathrm{k=1}}^{\infty} |\xi_{3}|(2\mathrm{k}-1) \int_{\mathbb{R}^{2}} \mathrm{P}_{\mathrm{k},\xi_{3}}(x',y') \mathcal{F}_{3} u(y',\xi_{3}) \df y'
\end{align}
for $\mathcal{F}_{3}\, u(\cdot, \xi_{3})$ in the domain of the magnetic Laplacian.

%%%%%%%%%%%%%%%%%%%%%%%%%%%%
\subsection{The sharp leading term}
First of all we extend  for every $j \in \mathbb{N}$ the eigenfunctions by $v_{j}(x):=0$ for all $x \in \Omega^{c}$.
Now we consider 
\begin{align*}
\trace{\mathrm{A}(\Omega)-\lambda}_{-} \ =&  \ \sum_{j: \lam<\lambda} \lambda  \norm{v_{j}}_{L^{2}(\R^{3})}^{2} - \norm{X_1v_{j}}_{L^{2}(\mathbb{R}^{3})}^{2}  - \norm{X_2v_{j}}_{L^{2}(\mathbb{R}^{3})}^{2}   \\
= & \ \int_{\mathbb{R}}   \sum_{j: \lam<\lambda} \lambda  \norm{ \mathcal{F}_{3} v_{j}(\cdot,\xi_{3})}_{L^{2}(\mathbb{R}^{2})}^{2} - \norm{\left(\iii\partial_{x_{1}}-\frac{1}{2}x_{2} \xi_{3} \right) \mathcal{F}_{3} v_{j}(\cdot,\xi_{3})}_{L^{2}(\mathbb{R}^{2})}^{2} \df \xi_{3} \\  
-& \ \int_{\mathbb{R}}\sum_{j: \lam<\lambda}   \norm{\left(\iii\partial_{x_{2}}   +\frac{1}{2}x_{1}\xi_{3}\right) \mathcal{F}_{3} v_{j}(\cdot,\xi_{3})}_{L^{2}(\mathbb{R}^{2})}^{2} \df \xi_{3}.
\end{align*}
At this point we apply the spectral decomposition \eqref{decomposition} of the free Heisenberg Laplacian. An application of Fatou's Lemma then yields 
\begin{align*}
\trace{\mathrm{A}(\Omega)-\lambda}_{-} \ \leq \ \int_{\mathbb{R}}   \sum_{j: \lam<\lambda}  \sum_{\mathrm{k}=1}^{\infty}(\lambda-|\xi_{3}|(2\mathrm{k}-1))_+ \norm{f_{j,\mathrm{k},\xi_{3}}}_{L^{2}(\mathbb{R}^{2})}^{2} 
 \df \xi_{3},
\end{align*}
where 
\begin{align*}
f_{{j,\mathrm{k},\xi_{3}}}(x')  & := \ \int_{\mathbb{R}^{2}} \mathrm{ P_{k, \xi_{3}}} (x',y') \mathcal{F}_{3} v_{j}(y', \xi_{3}) \df y' = \ \frac{1}{\sqrt{2\pi}}  \int_{\Omega} \mathrm{P_{k, \xi_{3}}} (x',y') \exp{-\i y_{3} \xi_{3}}v_{j}(y', y_{3}) \df y'  \df y_{3} \\ &=  \frac{1}{\sqrt{2\pi}}  \sprod{\mathrm{P_{k, \xi_{3}}}(x',\cdot) \exp{-\i \cdot \xi_{3}}}{ v_{j}(\cdot)}_{L^{2}(\Omega)}.
\end{align*}
We split the sum as follows;
\begin{align}
\trace{\mathrm{A}(\Omega)-\lambda }_{-} \ \leq&\ \int_{\mathbb{R}} \sum_{\mathrm{k}: \lambda>|\xi_{3}|(2\mathrm{k}-1)} (\lambda-|\xi_{3}|(2\mathrm{k}-1) )  \sum_{j=1}^{\infty} \norm{f_{j,\mathrm{k},\xi_{3}}}_{L^{2}(\mathbb{R}^{2})}^{2}  \df \xi_{3}   \nonumber \\
-&\int_{\mathbb{R}} \sum_{\mathrm{k}: \lambda>|\xi_{3}|(2k-1)} (\lambda-|\xi_{3}|(2\mathrm{k}-1) )\sum_{j: \lam\geq\lambda}^{\infty} \norm{f_{j,\mathrm{k},\xi_{3}}}_{L^{2}(\mathbb{R}^{2})}^{2}  \df \xi_{3},  \label{sum}
\end{align}
noting that the first term on the right hand side is positive and the other one is negative. The completeness of $v_{j}$ and the traces of $\mathrm{P_{k,\xi_{3}}}$, see \eqref{magnet}, yield
\begin{align} \label{vierzwei}
\sum_{j=1}^{\infty} \norm{f_{j,\mathrm{k},\xi_{3}}}_{L^{2}(\mathbb{R}^{2})}^{2}  =& \ \frac{1}{{2\pi}} \int_{\mathbb{R}^{2}} \sum_{j=1}^{\infty}  \left|  \sprod{\mathrm{P_{k, \xi_{3}}}  (x',\cdot) \exp{-\i \cdot \xi_{3}}}{ v_{j}(\cdot)}_{L^{2}(\Omega)}\right|^{2} \df x' = \  \frac{|\xi_{3}|}{4\pi^{2}}|\Omega|.
\end{align}
To obtain the sharp leading term in \eqref{main-eq} we apply this identity in the first integral on the right hand side of \eqref{sum}. Using the fact that   
$$
\sum\limits_{j=1}^{\infty}\frac{1}{(2j-1)^{2}}= \frac{\pi^{2}}{8},
$$ 
we thus get
\begin{align*}
&\int_{\mathbb{R}} \sum_{\mathrm{k}: \lambda>|\xi_{3}|(2\mathrm{k}-1)}(\lambda-|\xi_{3}|(2\mathrm{k}-1) ) \sum_{j=1}^{\infty} \norm{f_{j,\mathrm{k},\xi_{3}}}_{L^{2}(\mathbb{R}^{2})}^{2}  \df \xi_{3} \\
& = \ \frac{|\Omega|}{4 \pi^{2}}\, \int_{\mathbb{R}} \sum_{\mathrm{k}: \lambda>|\xi_{3}|(2\mathrm{k}-1)}(\lambda-|\xi_{3}|(2\mathrm{k}-1)) |\xi_{3}|   \df \xi_{3} \\
&  =  \ \frac{|\Omega|}{2 \pi^{2}} \, \sum_{\mathrm{k}=1}^{\infty} \frac{1}{(2\mathrm{k}-1)^{2}} \int_{0}^{\infty} s(\lambda-s)_{+}   \df s  \ = \ \frac{|\Omega|}{96} \,  \lambda^{3}.
\end{align*}
Inserting this back into \eqref{sum} gives
\begin{align}\label{vierdrei}
\trace{\mathrm{A}(\Omega)-\lambda }_{-}\leq \frac{ |\Omega|}{96}\, \lambda^{3}
-\int_{\mathbb{R}} \sum_{\mathrm{k}: \lambda>|\xi_{3}|(2k-1)}( \lambda-|\xi_{3}|(2\mathrm{k}-1) ) \sum_{j: \lam\geq\lambda}^{\infty} \norm{f_{j,\mathrm{k},\xi_{3}}}_{L^{2}(\mathbb{R}^{2})}^{2}  \df \xi_{3}.  
\end{align}

%%%%%%%%%%%%%%%%%%%%%%%%%%%%%%%%%%%%%%%
\subsection{The lower order term}
To obtain a suitable lower bound on the second term in \eqref{vierdrei} we use the same technique as in \cite{kovarik}. The key point of this approach is to estimate the quantity
\begin{align*}
\mathcal{R}_\lambda  := \sum_{j: \lam\geq\lambda} \norm{f_{j,\mathrm{k},\xi_{3}}}_{L^{2}(\mathbb{R}^{2})}^{2}
\end{align*}
from below by a power function of $\lambda$. Note that 
\begin{align*}
 &\mathcal{R}_\lambda \ = \ \frac{|\xi_{3}|}{4\pi^{2}}|\Omega|-  \sum_{j: \lam<\lambda}\norm{f_{j,\mathrm{k},\xi_{3}}}_{L^{2}(\mathbb{R}^{2})}^{2} = \\
 &\frac{1}{2\pi} \int_{\mathbb{R}^{2}}  \int_{\Omega} \left|\mathrm{P_{k, \xi_{3}}} (x',y')\exp{-\i y_{3}\xi_{3}}- \sum_{j: \lam<\lambda}   \sprod{\mathrm{P_{k, \xi_{3}}}  (x',\cdot) \exp{-\i \cdot \xi_{3}}}{ v_{j}(\cdot)}_{L^{2}(\Omega)} v_{j}(y',y_{3})  \right|^{2}  \df y'\df y_{3}\df x'.
\end{align*}
The inclusion $\omega \supseteq \omega^{\beta}$ and an application of $|z-w|^{2}\geq \frac{1}{2}|z|^{2}-|w|^{2}$, with $z,w \in \mathbb{C}$,  imply that 
\begin{align*}
\mathcal{R}_\lambda \ &\geq \ \frac{|\xi_{3}|}{8\pi^{2}}(b-a)\,  |\omega^{\beta}| \\ 
&- \frac{1}{2\pi} \int_{\mathbb{R}^{2}}  \int_{a}^{b}  \int_{\omega^{\beta}}  \left| \sum_{j: \lam<\lambda}   \sprod{\mathrm{P_{k, \xi_{3}}}  (x',\cdot) \exp{-\i \cdot \xi_{3}}}{ v_{j}(\cdot)}_{L^{2}(\Omega)} v_{j}(y',y_{3})  \right|^{2} \df y' \df y_{3}
 \df x' .
\end{align*}
Next we  estimate the negative integral. 
Note that  the linear combinations of $v_{j}$ lie in $ \mathrm{Dom}(\mathrm{A}(\Omega))$. Therefore we may apply Proposition \ref{prop} and obtain
\begin{align*}
&\frac{1}{2\pi} \int_{\mathbb{R}^{2}}\left( \int_{a}^{b} \int_{\omega^{\beta}} \left|\sum_{j: \lam<\lambda}  \sprod{\mathrm{P_{k, \xi_{3}}}  (x',\cdot) \exp{-\i \cdot \xi_{3}}}{ v_{j}(\cdot)}_{L^{2}(\Omega)} v_{j}(y',y_{3}) \right|^{2} \df y' \df y_{3} \right) \df x' \\
    & \leq \ c^{2+2/c} \beta^{2+2/c}   \lambda^{1+1/c} \frac{1}{2\pi} \int_{\mathbb{R}^{2}}\left(    \sum_{j: \lam<\lambda}  \ \left|    \sprod{P_{k, \xi_{3}} (x',\cdot) \exp{-\i \cdot \xi_{3}}}{ v_{j}(\cdot)}_{L^{2}(\Omega)} \right|^{2} \right) \df x' \\ 
  & \leq \ c^{2+2/c} \beta^{2+2/c}   \lambda^{1+1/c} \frac{1}{2\pi} \int_{\mathbb{R}^{2}}\left(    \int_{\Omega}\left| P_{k,\xi_{3}}(x',y')   \right|^{2} \df y' \df x_{3} \right) \df x' \ \\
 & = \ c^{2+2/c} \beta^{2+2/c}   \lambda^{1+1/c} \frac{|\Omega|}{4\pi^2}|\xi_3|,
\end{align*}
which yields  the following lower bound on $\mathcal{R}_\lambda$:
\begin{align*}
\mathcal{R}_\lambda \ \geq& \ \frac{|\xi_{3}|}{8\pi^{2}}(b-a) \left|\omega^{\beta}\right|-   c^{2+2/c} \beta^{2+2/c}   \lambda^{1+1/c} \frac{|\Omega|}{4\pi^2}|\xi_3|\\
 \geq& \  \frac{|\xi_{3}|}{8\pi^{2}}\beta\left( l(\omega)-   2c^{2+2/c} \beta^{1+2/c}   \lambda^{1+1/c} {|\Omega|}\right).
\end{align*}
Now we  set
$$ 
\beta^{1+2/c}= \frac{l(\omega)}{c^{2+2/c}\lambda^{1+1/c}(4+4/c)|\Omega|}, 
$$
which is possible for $\lambda\geq{\lambda_{1}(\Omega)}$, because of
$$ 
\beta^{1+2/c}\, \leq \, \frac{1}{c^{2+2/c}\lambda_{1}(\Omega)^{1+1/c}(4+4/c) \mathrm{R}{(\omega)}}\, \leq\,   \frac{\mathrm{R}{(\omega)}^{1+2/c}}{4}\, .
$$
The last inequality was obtained by applying Proposition \ref{prop} to $u=v_{1}$ and $\beta= \mathrm{R}{(\omega)}$.
Summing up we thus arrive at 
\begin{align*}
\mathcal{R}_\lambda \geq \  \frac{|\xi_{3}|}{8\pi^{2}}\lambda^{-\frac{c+1}{c+2}} l(\omega)^{\frac{2c+2}{c+2}} |\Omega|^{-\frac{c}{c+2}}    (2+4/c)(4c+4)^{-\frac{2c+2}{c+2}}  = \lambda^{-\frac{c+1}{c+2}}\, G(\Omega)\, |\xi_3|,
\end{align*}
where 
$$
G(\Omega) :=  \frac{l(\omega)^{\frac{2c+2}{c+2}}}{8\pi^{2}}\,  |\Omega|^{-\frac{c}{c+2}}    (2+4/c)(4c+4)^{-\frac{2c+2}{c+2}} \, .
$$
This in combination with \eqref{vierdrei} gives 
\begin{align*}
\trace{\mathrm{A}(\Omega)-\lambda }_{-}\leq \frac{ |\Omega|}{96}\,  \lambda^{3}
-G(\Omega)\lambda^{-\frac{c+1}{c+2}} \int_{\mathbb{R}} \sum_{\mathrm{k}: \lambda>|\xi_{3}|(2k-1)}( \lambda-|\xi_{3}|(2\mathrm{k}-1) )|\xi_3| \df \xi_{3}.  
\end{align*}
To finish the proof we calculate in the same way as in the beginning of the proof:
\begin{align*}
 \sum_{\mathrm{k}=1}^{\infty} \int_{0}^{\infty} (\lambda-\xi_{3}(2\mathrm{k}-1))_{+} {\xi_{3}} \df \xi_{3} \ =  \sum_{\mathrm{k}=1}^{\infty} \frac{1}{(2\mathrm{k}-1)^{2}} \int_{0}^{\infty} s(\lambda-s)_{+}   \df s \ =  \ \frac{\pi^{2}\, \lambda^{3}}{48}\, .
\end{align*}
This gives the estimate stated in Theorem \ref{dreieins}. 
\endgroup

\section*{\bf Aknowledgements}
 H.~K. was supported by the Gruppo Nazionale per Analisi Matematica, la Probabilit\`a e le loro Applicazioni (GNAMPA) of the Istituto Nazionale di Alta Matematica (INdAM).
The support of MIUR-PRIN2010-11 grant for the project  ``Calcolo delle variazioni'' (H.~K.) is also gratefully acknowledged. 

B.~R. was supported by the German Science Foundation through the Research Training Group 1838: \textit{Spectral Theory and Dynamics of Quantum Systems.}

 \bibliographystyle{plain}
\bibliography{heisenberg}

\def\cprime{$'$}
\begin{thebibliography}{10}

\bibitem{ancona}
{A.~Ancona}.
\newblock {On strong barriers and an inequality of Hardy for domains in
  $\mathbb{R}^{n}$.}
\newblock {\em J.London Math. Soc.}, 34:274--290, 1986.

\bibitem{melas}
{A.D.~Melas}.
\newblock {A lower bound for sums of eigenvalues of the Laplacian}.
\newblock {\em Proc.Amer.Math.Soc.}, 131:631--636, 2003.

\bibitem{hanson}
{A.M.~Hansson and A.~Laptev}.
\newblock {Sharp spectral inequalities for the Heisenberg Laplacian}.
\newblock {\em {London Math. Soc. Lecture Note Ser., 354}}, pages 100--115,
  2008.

\bibitem{davisss}
E.B. Davies.
\newblock Sharp boundary estimates for elliptic operators.
\newblock {\em Math. Proc. Cambridge Philos. Soc.}, 129(1):165--178, 2000.

\bibitem{daviesss}
Edward~Brian Davies.
\newblock {\em One-parameter semigroups}, volume~15 of {\em London Mathematical
  Society Monographs}.
\newblock Academic Press, Inc. [Harcourt Brace Jovanovich, Publishers],
  London-New York, 1980.

\bibitem{davies}
{E.B.~Davies}.
\newblock {A review of Hardy inequalities}.
\newblock In {\em {The Maz'ya anniversary Collection. Vol 2.Oper. Theory Adv.
  Appl.}}, 110, pages 55--67. 1999.

\bibitem{loss}
{E.H.~Lieb, M.~Loss}.
\newblock {\em {Analysis}}.
\newblock Amer. Math. Soc., second edition, pages 195-196. 1997.

\bibitem{follandsub}
{G.B.~Folland}.
\newblock {A fundamental solution for a subelliptic Operator}.
\newblock {\em {Bull. Amer. Math. Soc.}}, 79:373--376, 1973.

\bibitem{kovarik}
{H.~Kova{\v{r}}{\'\i}k and T.~Weidl}.
\newblock {Improved Berezin-Li-Yau inequalities with magnetic field}.
\newblock {\em Proc.~Royal Soc.~Edinburgh, Sect.~A, 145}, pages 145--160, 2015.

\bibitem{melas-type}
{H.~Kova{\v{r}}{\'\i}k, B.~Ruszkowski and T.~Weidl}.
\newblock {Melas-type bounds for the Heisenberg Laplacian on bounded domains}.
\newblock {\em arXiv:1511.04223}, 2015.

\bibitem{kvw}
{H.~Kova{\v{r}}{\'\i}k, S.~Vugalter and T.~Weidl}.
\newblock {Two dimensional Berezin-Li-Yau inequalities with a correction term}.
\newblock {\em {Comm.~Math.~Phys., 287}}, pages 959--981, 2009.

\bibitem{kufner}
B.~Opic and A.~Kufner.
\newblock {\em Hardy-type inequalities}, volume 219 of {\em Pitman Research
  Notes in Mathematics Series}.
\newblock Longman Scientific \& Technical, Harlow, 1990.

\bibitem{yolcu}
S.Y. Yolcu.
\newblock An improvement to a berezin-li-yau type inequality.
\newblock {\em Proc.~Amer.~Math.~Soc., 138}, pages 4059--4066, 2010.

\bibitem{yolcu2}
S.Y. Yolcu and T.~Yolcu.
\newblock Estimates for the sums of eigenvalues of the fractional laplacian on
  a bounded domain.
\newblock {\em Commun.~Cont.~Math., 15}, page 1250048, 2013.

\end{thebibliography}

%%%%%%%%%%%%%%%%%%%%%%%%%%%%%%%%%%%%%%%%%%%%%%%% 

\end{document}